\newtheorem{thma}{Theorem}
\newtheorem{lemm}{Lemma}
\newcommand{\fracnot}[3]{\mathcal{F}_{#1}^{#2,#3}}
\begin{document}
%
\title{Block-space GPU Mapping for Embedded Sierpi\'nski Gasket Fractals}


\author{
\IEEEauthorblockN{Crist\'obal A. Navarro}
\IEEEauthorblockA{Institute of Informatics,\\
Universidad Austral de Chile,\\
Valdivia, Chile\\
Email: cnavarro@inf.uach.cl}
\and
\IEEEauthorblockN{Benjam\'in Bustos}
\IEEEauthorblockA{Department of Computer Science (DCC)\\
University of Chile, Santiago, Chile\\
Email: bbustos@dcc.uchile.cl}
\and
\IEEEauthorblockN{Raimundo Vega}
\IEEEauthorblockA{Institute of Informatics,\\
Universidad Austral de Chile,\\
Valdivia, Chile\\
Email: rvega@inf.uach.cl}
\and
\IEEEauthorblockN{Nancy Hitschfeld}
\IEEEauthorblockA{Department of Computer Science (DCC)\\
University of Chile, Santiago, Chile\\
Email: nancy@dcc.uchile.cl}
}

\maketitle

\begin{abstract}
This work studies the problem of GPU thread mapping for a Sierpi\'nski gasket fractal embedded in a discrete Euclidean space of $n \times n$. A block-space map
$\lambda: \mathbb{Z}_{\mathbb{E}}^{2} \mapsto \mathbb{Z}_{\mathbb{F}}^{2}$ is proposed, from Euclidean parallel space $\mathbb{E}$ to embedded fractal space
$\mathbb{F}$, that maps in $\mathcal{O}(\log_2 \log_2(n))$ time and uses no more than $\mathcal{O}(n^\mathbb{H})$ threads with $\mathbb{H} \approx
1.58...$ being the Hausdorff dimension, making it parallel space efficient.  When compared to a bounding-box map, $\lambda(\omega)$ offers a sub-exponential
improvement in parallel space and a monotonically increasing speedup once $n > n_0$. Experimental performance tests show that in practice $\lambda(\omega)$ 
can produce performance improvement at any block-size once $n > n_0 = 2^8$, reaching approximately $10\times$ of speedup for $n=2^{16}$ under optimal block
configurations.
\end{abstract}

\begin{IEEEkeywords}
GPU computing; thread mapping; block-space fractal domains; Sierpinski gasket;
\end{IEEEkeywords}

%
\IEEEpeerreviewmaketitle

\section{Introduction}
Fractals can be described as self-similar structures \cite{mandelbrot2004} where a \textit{similar}\footnote{Depending on which fractal, the term \textit{similar}
can refer to \textit{exactly similar} or \textit{quasi similar}.} geometrical pattern is found at all scales.  Several natural phenomena produce fractal
patterns that obey a self-similar structure \cite{mandelbrot1982fractal}. Phenomena such as plant and tree growth \cite{Oppenheimer:1986:RTD:15886.15892,
Palmer1988}, terrain formation \cite{MILNE198867, 4767591}, molecular dynamics \cite{rothemund2004}, snowflake crystallization \cite{He2008739}, blood vessels
\cite{PhysRevLett.90.118101}, morphological features of living organisms \cite{WeibelL361}, among many others, display a fractal design where
self-similarity is a critical feature for modeling its geometrical structure. 

One well known fractal is the \textit{Sierpi\'nski Gasket}, or \textit{Sierpi\'nski Triangle}, described by Waclaw Sierpi\'nski in 1915.  Despite being over
than a century old, the Sierpi\'nski gasket remains a relevant subject as it is present in different fields such as the construction of antennas
\cite{855489, 664115}, cellular automata \cite{Ohi2001, RevModPhys.55.601}, molecular DNA self-organization \cite{shang2015,rothemund2004}, self-assembly theory
\cite{Doty:2012:TAS:2380656.2380675, LATHROP2009384} and phase transitions on fractal spin lattices \cite{0305-4470-17-2-028, PhysRevLett.45.855, Mota20086095},
among others, because of its special properties. For fractal simulations such as in Cellular Automata or Monte Carlo simulation on
spin models, the Sierpi\'nski gasket is usually discretized. Is this form, the fractal is defined as a self-similar structure where level
$r+1$ is composed of three repetitions of level $r$ at $1/2$ the scale,  as shown in Figure \ref{fig_sierpinski_discrete_construction}.  
\begin{figure}[ht!]
\centering
\includegraphics[scale=0.8]{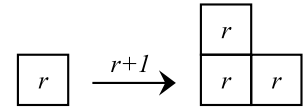}
\caption{Construction of the discrete Sierpi\'nski gasket.}
\label{fig_sierpinski_discrete_construction}
\end{figure}

Applications that involve graphical representations or data-parallel simulations with nearest neighbors interactions may benefit if the storage of the structure
preserves spatial locality in memory space, \textit{i.e}, the memory locations $(x\pm1,y\pm1)$ define a neighborhood in the actual fractal as well. 
One way to achieve this is to embed the fractal in a Euclidean space of $n \times n$ as shown in Figure \ref{fig_sierpinski_discrete_embedded}.  
\begin{figure}[ht!]
\centering
\includegraphics[scale=0.08]{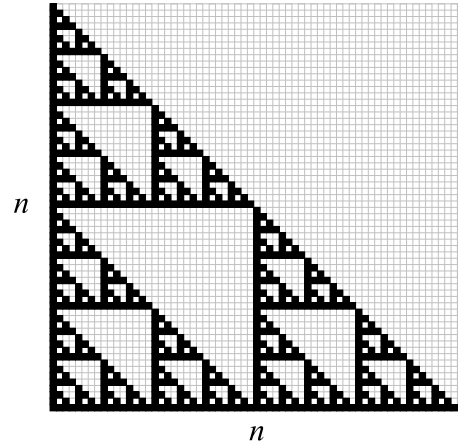}
\caption{A discrete Sierpi\'nski gasket embedded in a $n \times n$ Euclidean space.}
\label{fig_sierpinski_discrete_embedded}
\end{figure}

For applications that can operate in an embedded fractal domain, computations will usually consist of
accessing all the elements of the fractal and eventually perform arithmetic/logic operations that involve the data-element itself and possibly its nearest neighbors.
Eventually, when the fractal is large enough to the point of containing hundreds of thousands of elements, a sequential computation can take an excessive
amount of time for the practical requirements of the field. In these situations GPU computing becomes an attractive tool for accelerating the task \cite{navhitmat2014}. 

For every GPU computation there is a stage where threads are mapped from parallel to data space. A map, defined as $f: \mathbb{Z}^k \rightarrow \mathbb{Z}^m$,
transforms each $k$-dimensional point $x=(x_1, x_2, ..., x_k)$ in parallel space $P^k$ into a unique $m$-dimensional point $f(x) = (y_1, y_2, \cdots, y_m)$ in data
space $D^m$. GPU parallel spaces are defined as orthotopes $\Pi^k \in P^k$ in $k=1,2,3$ dimensions. A known way of mapping threads is to use the \textit{bounding-box}
(BB) approach, that builds an orthotope $\Pi^k$ sufficiently large to cover the whole data space and threads are mapped using the identity $f(\omega) = \omega$. 
Such map is highly convenient and efficient for the class of problems where data space is also defined by an orthotope; such as vectors, tables, matrices and
box-shaped volumes. But for an embedded fractal such as the Sierpi\'nski gasket, this approach is no longer efficient in terms of parallel space as many threads 
fall outside the domain, introducing a performance penalty to the execution time (see Figure \ref{fig_sierpinski_bounding_box_map}).
\begin{figure}[ht!]
\centering
\includegraphics[scale=0.16]{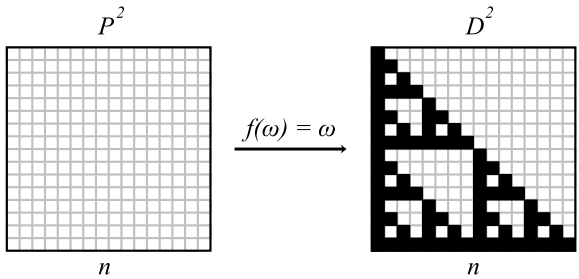}
\caption{In the bounding-box approach the threads that fall outside the fractal have to be discarded at run time.}
\label{fig_sierpinski_bounding_box_map}
\end{figure}

Two research questions arise from this GPU efficiency problem on the embedded Sierpi\'nski gasket; The first question: \textit{Is there any parallel-space efficient function,
namely $\lambda(\omega)$, that can use asymptotically the same number of threads as data elements in the fractal and map blocks properly onto the embedded Sierpi\'nski
gasket}?  (see Figure \ref{fig_sierpinski_efficient_map}).
\begin{figure}[ht!]
\centering
\includegraphics[scale=0.16]{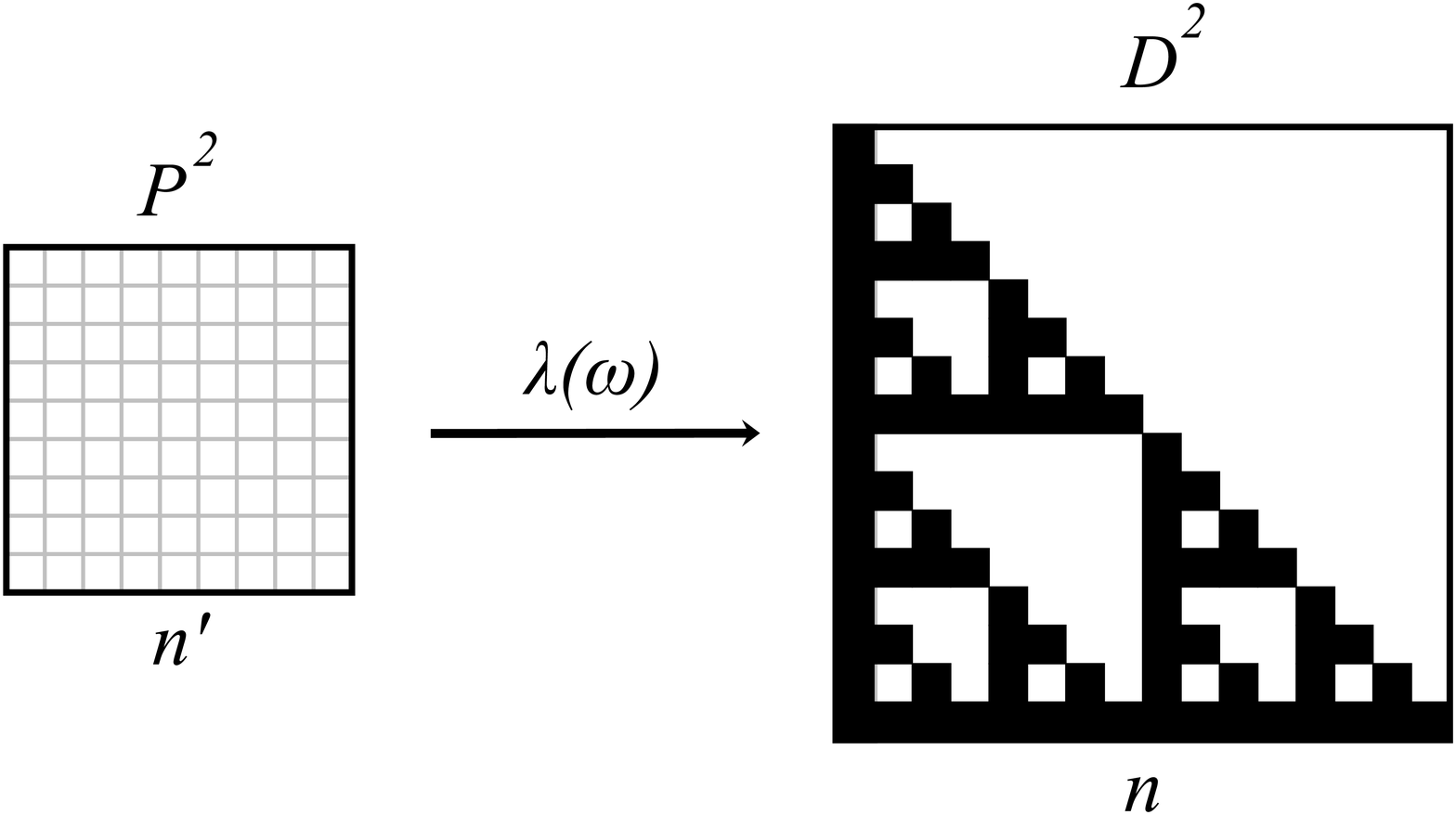}
    \caption{A $\lambda(\omega)$ map would use asymptotically the same number of threads as data elements.}
\label{fig_sierpinski_efficient_map}
\end{figure}

It is important to note that Question 1 asks for a block-space map and not a thread-space one.  The change from thread-space to block-space allows coalesced memory to be
preserved throughout the entire domain as thread organization is not compromised inside a block.  

The second question relates to performance: \textit{Will the parallel-space improvement translate into a significant GPU performance improvement}?

The present work focuses on these two questions and provides positive answers for both of them. A dedicated analysis is devoted to show that an alternating
unrolling strategy allows to define a parallel-space efficient $\lambda(\omega)$ that only requires $\mathcal{O}(n^\mathbb{H})$ threads, with $\mathbb{H} \approx
1.58...$ being the Hausdorff dimension of the Sierpi\'nski fractal. It addition, it is shown that by taking advantage of block-parallelism, $\lambda(\omega)$
becomes computable in $\mathcal{O}(\log_2 \log_2 (n))$ time which is fast enough to produce monotonically increasing speedup once $n > n_0$ with $n_0$
being a constant threshold value. 

The rest of the manuscript presents related work (Section \ref{sec_related-work}), a formal definition and analysis of $\lambda(\omega)$ (Section
\ref{sec_formulation}) and Section (\ref{sec_performance}) presents performance results. 
A discussion and comments on future work is found in Section \ref{sec_discussion}.

\section{Related Work}
\label{sec_related-work}
The following related works can be classified into two categories; (1) studies on efficient GPU mapping for triangular domains and (2) general studies on the
structure of the discrete Sierpi\'nski fractal.  

One of the first works that explored the possibilities of improving the GPU mapping and stage was the research of Jung \textit{et. al.} \cite{Jung2008} whom
proposed packed data structures for representing triangular and symmetric matrices with applications to LU and Cholesky decomposition
\cite{springerlink_gustavson}. The strategy is based on building a \textit{rectangular box strategy} for accessing and storing a triangular matrix (upper or
lower). Data structures become practically half the size with respect to classical methods based on the full matrix. The strategy was originally intended to
modify the data space (\textit{i.e.,} the matrix), however one can apply the concept analogously to the parallel space.

Ries \textit{et. al.} contributed with a parallel GPU method for the triangular matrix inversion \cite{Ries:2009:TMI:1654059.1654069}.  The authors identified
that the parallel space indeed can be improved by using a \textit{recursive partition} of the grid, based on a \textit{divide and conquer} strategy.  The
approach takes $O(\log_2(n))$ time by doing a balanced partition of the structure, from the orthogonal point to the diagonal. 

Q. Avril \textit{et. al.} proposed a GPU mapping function for collision detection based on the properties of the \textit{upper-triangular map} \cite{AvrilGA12}.
The map is a thread-space function $u(x) \rightarrow (a, b)$, where $x$ is the linear index of a thread $t_x$ and the pair $(a,b)$ is a unique two-dimensional
coordinate in the upper triangular matrix. Since the map works in thread space, the map is accurate only in the range $n \in [0, 3000]$ of linear problem size.

Navarro, Hitschfeld and Bustos have proposed a block-space map function for $2$-simplices and $3$-simplices \cite{DBLP:conf/hpcc/NavarroH14, CLEI-2016-navarro},
based on the solution of an $m$ order equation that is formulated from the linear enumeration of the discrete elements. The authors report performance
improvement for $2$-simplices, and for the $3$-simplex case, the mapping technique is extended to the discrete orthogonal tetrahedron, where the parallel space
usage can be $6\times$ more efficient. However the authors clarify that it is difficult to translate such space improvement into performance improvement, as the
map requires the computation of several square and cubic roots that introduce a significant amount of overhead to the process. From the point of view of
data-reorganization, a succinct blocked approach can be combined along with the block-space thread map, producing additional performance benefits with a
sacrifice of $o(n^3)$ extra memory. 

Exploring the benefits of efficient GPU mapping onto the embedded Sierpi\'nski gasket becomes an interesting topic of research since its geometry is no longer
Euclidean as in the related works, but instead it is embedded in an Euclidean one.  Finding a proper efficient $\lambda(\omega)$ would produce an asymptotic
improvement in parallel space and a potential performance improvement that could eventually be exploited.

\section{Analysis and Formulation of $\lambda(\omega)$}
\label{sec_formulation}

This Section first analyzes two important properties of the discrete embedded Sierpi\'nski gasket, which are helpful in the formulation of the map $\lambda(\omega)$.

\subsection{Analysis of the Space and Packing of $\fracnot{n}{k}{s}$}

The discrete Sierpi\'nski gasket belongs to a category of discrete fractals where its structure can be built by replicating $k$ instances of itself at a scale
$s$ and placed with an arbitrary spatial organization. The notation $\fracnot{n}{k}{s}$ is introduced to denote such fractal, where $n \in \mathbb{N}$ is the
linear size, $k \in \mathbb{N}$ the replication factor and $0 < s < 1\in \mathbb{R}$ the scaling factor in terms of reduction. Since fractals have a recursive
self-similar structure, their volume $\mathcal{V}(\fracnot{n}{k}{s})$ may be expressed as
\begin{equation}
    \mathcal{V}(\fracnot{n}{k}{s}) = \sum_{i=1}^{k} {\mathcal{V}(\fracnot{sn}{k}{s})} 
\end{equation}
with $\mathcal{V}(\fracnot{1}{k}{s}) = 1$ being the limit condition of the recursion. Given that the replication factor $k$ is fixed, and $n$ scales by
factors of $s$, the volume may be simplified into \begin{equation}
    \mathcal{V}(\fracnot{n}{k}{s}) = k^r
\end{equation}
where $r = \log_{1/s}(n)$ is defined as the scale level. 

The Sierpi\'nski gasket is a particular case where $k=3$, $s=1/2$ and $r = \log_2(n)$. Successive steps of the fractal produce the pattern early depicted in Figure
\ref{fig_sierpinski_discrete_embedded}. The following two lemmas are introduced to support the formulation of a map $\lambda(\omega)$.
\begin{lemm}
\label{lemma_hausdorff}
The space occupied by a discrete embedded Sierpi\'nski gasket is in correspondence with the Hausdorff dimension of the infinite Sierpi\'nski gasket.
\end{lemm}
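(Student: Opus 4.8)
The plan is to establish that the count of discrete elements in $\fracnot{n}{3}{1/2}$ obeys a pure power law $n^{\mathbb{H}}$ whose exponent is exactly the Hausdorff dimension of the continuous gasket, so that the occupied space and the fractal dimension coincide at the level of the scaling exponent. I would begin from the volume expression already derived, namely $\mathcal{V}(\fracnot{n}{k}{s}) = k^{r}$ with $r = \log_{1/s}(n)$, and specialize it to the Sierpi\'nski parameters $k=3$, $s=1/2$, $r=\log_2 n$, obtaining $\mathcal{V}(\fracnot{n}{3}{1/2}) = 3^{\log_2 n}$. Equivalently, a level-$r$ gasket has $3^{r}$ filled cells inside a box of linear size $n = 2^{r}$, which is the same statement.

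The key algebraic step is an elementary change of base. Writing $3 = 2^{\log_2 3}$ and substituting gives $3^{\log_2 n} = 2^{(\log_2 3)(\log_2 n)} = n^{\log_2 3}$, so the discrete volume is $\mathcal{V}(\fracnot{n}{3}{1/2}) = n^{\log_2 3}$, a power law in the linear size $n$ with exponent $\log_2 3 \approx 1.585$. Next I would invoke the standard similarity-dimension formula for self-similar attractors: the infinite Sierpi\'nski gasket is the fixed point of an iterated function system of $N=3$ contractions, each of ratio $s=1/2$, and it satisfies the open set condition, so its Hausdorff dimension is the value $\mathbb{H}$ solving $N s^{\mathbb{H}} = 1$, i.e. $\mathbb{H} = \log N / \log(1/s) = \log_2 3$. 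Comparing exponents yields $\mathcal{V}(\fracnot{n}{3}{1/2}) = n^{\mathbb{H}}$, which is precisely the claimed correspondence.

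The algebra here is routine, so the part that actually deserves care is pinning down what "in correspondence with" is meant to assert. I would make explicit that the correspondence is at the level of the growth exponent — the number of discrete cells scales as $n^{\mathbb{H}}$ — and is not a statement about Hausdorff measure of a limiting set. I would also flag the one genuine mathematical point, namely that the similarity dimension and the true Hausdorff dimension agree only because the gasket's generating maps satisfy the open set condition; since that condition holds, no gap opens between the discrete counting exponent and the classical fractal dimension, and the lemma follows.
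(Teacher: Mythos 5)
Your proposal is correct and follows essentially the same route as the paper: specialize $\mathcal{V}(\fracnot{n}{k}{s})=k^{r}$ to $k=3$, $s=1/2$, $r=\log_2 n$, and rewrite $3^{\log_2 n}=n^{\log_2 3}$ to match the Hausdorff exponent. The only addition is your explicit justification (via the similarity-dimension formula and the open set condition) that $\log_2 3$ really is the Hausdorff dimension of the continuous gasket, a point the paper simply asserts.
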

\begin{proof}
    The space occupied by a Sierpi\'nski gasket of linear size $n$ is
    $\mathcal{V}(\fracnot{n}{3}{\frac{1}{2}}) = 3^r$.  Given that $r =
    \log_2(n)$ and $3^{log_2(x)} = 2^{\log_2(3)\log_2(x)}$, the space expression
    can be rearranged into\begin{equation}
        \mathcal{V}(\fracnot{n}{3}{\frac{1}{2}}) = n^{\mathcal{H}= \log_2(3)}
    \end{equation}
    where the exponent $\mathcal{H} = \log_2(3) \approx 1.5849...$ is the Hausdorff dimension of the original infinite Sierpi\'nski gasket.
\end{proof}

\begin{lemm}
\label{lemma_regular}
    A discrete Sierpi\'nski gasket of level $r$ packs into a $2$-orthotope $\Pi^2$ of dimensions $3^{\lceil \frac{r}{2} \rceil} \times 3^{\lfloor \frac{r}{2} \rfloor}$.
\end{lemm}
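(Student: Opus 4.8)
The plan is to argue by induction on the level $r$, exploiting the same recursive self-similarity used in Lemma~\ref{lemma_hausdorff}: a level-$r$ gasket is the disjoint union of three level-$(r-1)$ gaskets, where disjointness is forced by the fact that the volume is exactly $3^r = 3\cdot 3^{r-1}$, leaving no room for shared cells. The induction hypothesis will assert that a level-$(r-1)$ gasket packs into a rectangle of dimensions $3^{\lceil (r-1)/2 \rceil} \times 3^{\lfloor (r-1)/2 \rfloor}$, and the inductive step will assemble three copies of that rectangle into one of dimensions $3^{\lceil r/2 \rceil} \times 3^{\lfloor r/2 \rfloor}$.

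First I would dispatch the base cases: for $r=0$ the gasket is a single cell and the target box is $3^0 \times 3^0 = 1\times 1$, while for $r=1$ it has three cells and the box is $3^1 \times 3^0 = 3\times 1$, into which the three cells lie in a single row. The inductive step is the alternating-unrolling construction. Given three copies of the level-$(r-1)$ rectangle, I would concatenate them along a single axis, alternating the axis with the parity of $r$: when $r$ is odd the three congruent rectangles are joined horizontally (tripling the width), and when $r$ is even they are joined vertically (tripling the height). Since the three pieces are congruent rectangles glued along a common edge, the result is again a rectangle with neither gaps nor overlaps. The dimensions then check out by cases: writing $r=2m+1$ in the odd case, the hypothesis supplies a $3^m \times 3^m$ piece and tripling the width gives $3^{m+1}\times 3^m = 3^{\lceil r/2\rceil}\times 3^{\lfloor r/2\rfloor}$; writing $r=2m$ in the even case, the hypothesis supplies a $3^m \times 3^{m-1}$ piece and tripling the height gives $3^m\times 3^m$, again matching $3^{\lceil r/2\rceil}\times 3^{\lfloor r/2\rfloor}$. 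The identity $\lceil r/2\rceil + \lfloor r/2\rfloor = r$ then guarantees the box has area $3^r$.

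I do not expect a serious obstacle, as the step is essentially bookkeeping on the ceiling/floor exponents together with the disjointness already guaranteed by Lemma~\ref{lemma_hausdorff}. The one point deserving care is making the word \emph{packs} precise: rather than leaning on the area count alone, I would pin down an explicit placement of the three sub-rectangles (e.g.\ the three horizontal slabs occupying consecutive column-ranges offset by multiples of the sub-width, and symmetrically for the vertical case), so that the induced map from fractal cells to box cells is manifestly injective. Surjectivity, and hence exactness of the packing, then follows immediately from combining this injectivity with the matching area $3^r$ supplied by Lemma~\ref{lemma_hausdorff}.
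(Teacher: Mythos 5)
Your proposal is correct and follows essentially the same route as the paper: induction on the level, with the inductive step tripling one dimension of the previous rectangle along an axis that alternates with the parity of $r$, and the exponent bookkeeping via $\lceil r/2\rceil + \lfloor r/2\rfloor = r$ matches the paper's case analysis exactly. The only differences are cosmetic refinements on your part (an explicit $r=1$ base case and a more careful treatment of disjointness and exact coverage), which the paper leaves implicit.
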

\begin{proof}
    Proof by induction on $r$:
    \begin{itemize}
        \item{Base case:} At scale $r=0$ the fractal has a space of $\mathcal{V}(\fracnot{1}{3}{\frac{1}{2}}) = 1$ element that packs into a regular
            $2$-orthotope of $1 \times 1 = 3^{\lceil \frac{0}{2} \rceil} \times 3^{\lfloor \frac{0}{2} \rfloor }$ satisfying $3^{\lceil \frac{r}{2} \rceil} \times 3^{\lfloor \frac{r}{2} \rfloor}$.
        \item{Induction step:} It is assumed that the orthotope for $r=k$ is quasi-regular. If $k$ is even, the packing for
            $k+1$ will triple the horizontal dimension of the $2$-orthotope. If $k$ is odd, the packing for $k+1$ will triple the vertical dimension of the
            $2$-orthotope. Since even and odd must alternate, the dimensions of the packed $2$-orthotope for $k+1$ can only be $3 \cdot 3^{\lceil{\frac{k}{2}}\rceil} 
            \times 3^{\lfloor{\frac{k}{2}}\rfloor}$ or $3\cdot 3^{\lceil{\frac{k}{2}}\rceil} \times 3\cdot 3^{\lfloor{\frac{k}{2}}\rfloor}$, which is regular or quasi-regular, respectively.
    \end{itemize}
\end{proof}

The packing process of Lemma (\ref{lemma_regular}) is illustrated in Figure \ref{fig_sierpinski_packing}, where the packing steps are in correspondence with the scale
levels shown in Figure \ref{fig_sierpinski_packing}.
\begin{figure}[ht!]
    \centering
    \includegraphics[scale=0.20]{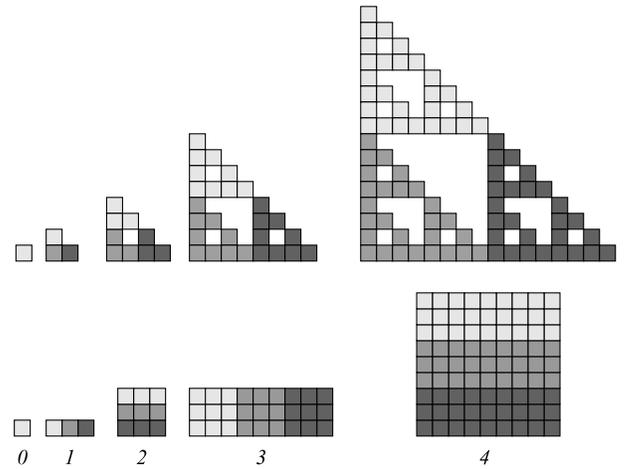}
    \caption{Each scale of the Sierpi\'nski fractal packs into a $2$-orthotope
    $\Pi^2$ of dimensions $3^{\lceil \frac{r}{2} \rceil} \times 3^{\lfloor \frac{r}{2} \rfloor}$.}
    \label{fig_sierpinski_packing}
\end{figure}

\subsection{Changing from Thread-space to Block-space}
An important aspect to consider is at which level the parallel space will be mapped. Two approaches are possible; (1) thread-space mapping and (2) block-space
mapping. For first one, $\lambda(\omega)$ defines $\omega$ as a unique thread location in parallel space.  For the second approach, $\lambda(\omega)$ defines
$\omega$ as a block coordinate in which several threads are contained.  The block-space approach has three important advantages over thread-space mapping.
First, in block-space, the fractal becomes a simplified version of the original, requiring less elements to be mapped. Second, since the fractal is a simplified
version of itself, it is possible to work on higher sizes of $n$ before the CUDA grid maximum dimensions are reached. Third, the block-space approach allows the
possibility for threads inside a block to preserve locality, which is essential for coalesced memory accesses on data.

The block-space map $\lambda(\omega)$ is now introduced, where $\omega = (\omega_x,\omega_y)$ denotes a two-dimensional coordinate of a block of constant size
$|B| = \rho_x \times \rho_y$ threads.  The change from thread-space to block-space means that blocks are mapped to a simplified version of the fractal of linear
size $n_b = n/b$, as shown in Figure \ref{fig_thread_vs_block}.
\begin{figure}[ht!]
\centering
\includegraphics[scale=0.05]{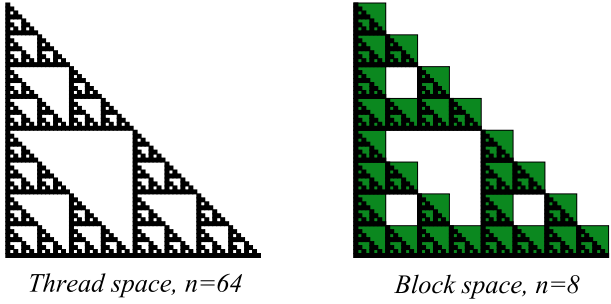}
    \caption{In thread-space mapping, threads are directly mapped one-to-one to the elements of the fractal of linear size $n=64$. In block-space mapping, $|B|
    = 8 \times 8$ and blocks of threads are mapped onto a simplified version of the fractal of linear size $n_b=64/8=8$.}
\label{fig_thread_vs_block}
\end{figure}

It is important to clarify that the extra green regions visible in Figure \ref{fig_thread_vs_block} do not necessarily mean unused threads. The stage of local thread
mapping is detailed later in this Section, where three approaches can be used.

The formulation of $\lambda(\omega)$ continues in block-space with $\rho_x = \rho_y$ to simplify the analysis, with $n_b$ the new simplified linear size of the
fractal with the origin $(0,0)$ located at the top-left corner for both the parallel and embedded fractal spaces, and with $y$ increasing downwards.

\subsection{Formulation of $\lambda(\omega)$}
The function $\lambda: \mathbb{Z}_{\mathbb{E}}^{2} \mapsto
\mathbb{Z}_{\mathbb{F}}^{2}$ is introduced as a mapping of block coordinates
$\omega$ in parallel-space onto block coordinates in the embedded space. The
intuition behind is an unrolling process applied in parallel to each $\omega \in \Pi^2$ through all
the scale levels. At each level, different $x,y$ offsets are accumulated to form
the final $(\lambda_x(\omega), \lambda_y(\omega))$ coordinate in the fractal.
\begin{thma}
\label{theorem_lambda}
    There exists a block-space parallel-space efficient $\lambda(\omega)$ that can map blocks in
    $\mathcal{O}(\log_2 \log_2 (n_b))$ time using $|B| = \mathcal{\theta}(\frac{\log_2(n_b)}{\log_2 \log_2(n_b)})$ threads per block. 
\end{thma}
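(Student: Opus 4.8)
The plan is to construct $\lambda$ explicitly as an accumulation over the $r_b = \log_2(n_b)$ scale levels of the block-level gasket, and then extract the claimed running time from a work-time (Brent-style) tradeoff carried out inside each block. By Lemma~\ref{lemma_regular} the block-level gasket packs into an orthotope of dimensions $3^{\lceil r_b/2\rceil}\times 3^{\lfloor r_b/2\rfloor}$, so a block coordinate $\omega=(\omega_x,\omega_y)$ admits a base-$3$ expansion with $\lceil r_b/2\rceil$ digits in $x$ and $\lfloor r_b/2\rfloor$ digits in $y$, where each digit selects, at one scale level, which of the three self-similar copies the block belongs to. The first step is to write $\lambda_x(\omega)=\sum_i \delta_x^{(i)}$ and $\lambda_y(\omega)=\sum_i \delta_y^{(i)}$, where the per-level offset $\delta^{(i)}$ depends only on the $i$-th base-$3$ digit of $\omega$ (extractable independently as $(\omega/3^{i})\bmod 3$) and on the parity of $i$, which controls the alternating horizontal/vertical unrolling. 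Because the digits are mutually independent, the two sums are associative, which is precisely what legitimizes a parallel combination.

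For the space bound I would observe that the launched grid is exactly the packing orthotope, whose area is $3^{\lceil r_b/2\rceil}\cdot 3^{\lfloor r_b/2\rfloor}=3^{r_b}=n_b^{\mathcal{H}}$ by Lemma~\ref{lemma_hausdorff}. This equals the number of occupied blocks up to the quasi-regular rounding, so asymptotically no block is wasted and $\lambda$ is parallel-space efficient.

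The running time then follows by distributing the $r_b$ levels across the $|B|$ threads of a block. Each thread sequentially accumulates the offsets of $\lceil r_b/|B|\rceil$ consecutive levels in $O(r_b/|B|)$ time, after which the $|B|$ partial $(x,y)$ offsets are merged by a balanced tree reduction in $O(\log_2 |B|)$ steps. Choosing $|B|=\theta\!\left(\frac{\log_2(n_b)}{\log_2\log_2(n_b)}\right)=\theta(r_b/\log_2 r_b)$ balances the two phases: the sequential phase costs $O(r_b/|B|)=O(\log_2 r_b)=O(\log_2\log_2 n_b)$ and the reduction costs $O(\log_2|B|)=O(\log_2 r_b)=O(\log_2\log_2 n_b)$, yielding the stated $O(\log_2\log_2 n_b)$ total with the stated number of threads per block.

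The step I expect to be the main obstacle is justifying the clean additive decomposition $\lambda=\sum_i\delta^{(i)}$ with genuinely independent per-level terms: the alternating unrolling must be shown to produce offsets that depend only on the local digit and the parity of the level, and not on the cumulative state of earlier levels, since any carry-like coupling between levels would destroy associativity and collapse the reduction back into an $O(r_b)$ sequential scan. Verifying that the horizontal/vertical alternation of Lemma~\ref{lemma_regular} yields exactly this carry-free, digit-local form, and that the quasi-regular rounding near the orthotope boundary introduces no off-by-one coupling, is where the real work lies; once that is in place, the time and space bounds are routine applications of the work-time principle.
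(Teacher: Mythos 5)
Your proposal follows essentially the same route as the paper: you extract a base-$3$ digit per scale level from $\omega$ (the paper's helper function $\beta_\mu(\omega)$ is exactly your $(\omega/3^i)\bmod 3$ with alternating action on $\omega_x$ and $\omega_y$), sum the resulting per-level offsets, invoke Lemma~\ref{lemma_hausdorff} for space efficiency, and balance a per-thread sequential accumulation against a tree reduction via Brent's theorem to get $\mathcal{O}(\log_2\log_2 n_b)$ time with $\theta(\log_2(n_b)/\log_2\log_2(n_b))$ threads. If anything, you make the Brent-style balancing more explicit than the paper does, and the carry-free decomposition you flag as the main obstacle is likewise asserted rather than proven in the paper's own argument.
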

\begin{proof}
    \textit{By construction}: let $r_b = 2^{n_b}$ be the block-space scale level of the fractal, $\Pi^2$ the $2$-orthotope of $3^{\lceil \frac{r_b}{2}
    \rceil} \times 3^{\lfloor \frac{r_b}{2} \rfloor}$ blocks that maps onto the discrete Sierpi\'nski gasket $\fracnot{n_b}{3}{\frac{1}{2}}$, with each block having
    $\rho_x \times \rho_y$ threads. By Lemma (\ref{lemma_hausdorff}), $\Pi^2$ is parallel-space efficient in block-space. 
    
    A helper index function $\beta_\mu(\omega)$ is defined as   
    \begin{equation}
        \beta_\mu(\omega) = \Big( \frac{\omega_x((\mu+1)\mod 2) + \omega_y(\mu \mod 2)}{3^{\lceil \frac{\mu}{2} \rceil-1}}\Big) \mod 3
    \end{equation}
    to produce an index in the range $\beta_\mu(\omega) = 0,1,2$ that identifies, within scale level $\mu \in [0..r_b]$, which of the three regions of the fractal does block
    $\omega$ belongs to. For even $\mu$, $\beta_\mu(\omega)$ acts on $\omega_x$. For odd $\mu$, it acts on $\omega_y$. Regions are sorted as $0$ for top, $1$ for
    middle and $2$ for right (see Figure \ref{fig_sierpinski_discrete_construction} for visual reference).

    Having the $\beta_\mu(\omega)$ index, the weight functions 
    \begin{align}
        \Delta_{\mu}^x = \Big\lfloor \frac{\beta_\mu}{2} \Big\rfloor,\ \ \ \Delta_\mu^y = \beta_\mu - \Big\lfloor \frac{\beta_\mu}{2} \Big\rfloor
    \end{align}
    compute the offset weights, $0$ or $1$, for each of the $x$ and $y$ directions at scale level $\mu$. The value of the offset corresponds to $2^{\mu-1}$, which
    is the linear size of each region at scale level $\mu$. For a given $\mu$, the combination of the weight functions with the offset produce partial coordinates of the
    form
    \begin{align} 
        \tau_x^\mu = \Delta_{\mu}^x2^{\mu-1},\\
        \tau_y^\mu = \Delta_{\mu}^y2^{\mu-1}
    \end{align}
    that contribute to the final mapped coordinate. The summation of all partial coordinates produce the map 
    \begin{align}
        \lambda(\omega) &= (\lambda_x(\omega), \lambda_y(\omega)), \\
        \lambda_x(\omega) &= \sum_{\mu=1}^{\log_2(n_b)} \tau_\mu^x\\
        \lambda_y(\omega) &= \sum_{\mu=1}^{\log_2(n_b)} \tau_\mu^y
    \end{align}
    which can be computed in $\mathcal{O}(\log_2\log_2(n))$ time (\textit{i.e.,} $n_b \in \theta(n)$) using a parallel reduction with the threads contained
    in the $\omega$ block. Finally, by Brent's Theorem \cite{Brent:1974:PEG:321812.321815}, $|B| = \mathcal{\theta}(\frac{\log_2(n)}{\log_2 \log_2(n)})$ threads are
    sufficient for a block of threads to reduce efficiently in parallel. 
\end{proof}

\begin{thma}
\label{theorem_speedup}
    $\lambda(\omega)$ requires asymptotically less work than the bounding-box approach. 
\end{thma}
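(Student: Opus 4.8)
The plan is to compare the total computational work of the two maps, where, following the standard PRAM convention, I take the work to be the number of launched blocks multiplied by the number of elementary operations each block performs during the mapping stage. Since both maps use identical blocks of $|B| = \rho_x \times \rho_y$ threads and identical per-element computation inside a block, that common factor cancels and the comparison reduces to the block-level mapping cost. I would write the work of the bounding-box map as $W_{BB}$ and the work of $\lambda(\omega)$ as $W_\lambda$, both as functions of the block-space linear size $n_b$, and then show $W_\lambda = o(W_{BB})$.

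First I would bound $W_{BB}$. The bounding box builds a full $2$-orthotope of $n_b \times n_b$ blocks so as to cover the embedded fractal, and each block is mapped by the identity together with an in-or-out test, costing $\mathcal{O}(1)$. Hence $W_{BB} = \Theta(n_b^2)$, of which, by Lemma \ref{lemma_hausdorff}, only $\Theta(n_b^{\mathcal{H}})$ blocks with $\mathcal{H} = \log_2 3$ land inside the gasket while the remaining $\Theta(n_b^2)$ are discarded.

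Next I would bound $W_\lambda$. By Lemma \ref{lemma_hausdorff} the efficient orthotope contains only $\Theta(n_b^{\mathcal{H}})$ blocks, and by the construction in Theorem \ref{theorem_lambda} each block evaluates $\lambda$ as a sum of $\log_2(n_b)$ partial coordinates $\tau_\mu^x, \tau_\mu^y$. Even though a parallel reduction collapses the latency to $\mathcal{O}(\log_2 \log_2 n_b)$, the number of elementary additions performed per block is $\Theta(\log_2 n_b)$, so $W_\lambda = \Theta\!\big(n_b^{\mathcal{H}} \log_2 n_b\big)$. Forming the ratio gives
\begin{equation}
\frac{W_\lambda}{W_{BB}} = \Theta\!\Big(\frac{\log_2 n_b}{n_b^{\,2-\mathcal{H}}}\Big),
\end{equation}
and since $2 - \mathcal{H} = 2 - \log_2 3 \approx 0.415 > 0$, the polynomial denominator dominates the logarithmic numerator, so the ratio tends to $0$. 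This yields $W_\lambda = o(W_{BB})$, the claimed asymptotic reduction.

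The main obstacle is the honest accounting of the per-block mapping cost. A naive count comparing only the $\Theta(n_b^{\mathcal{H}})$ versus $\Theta(n_b^2)$ block totals would make the result immediate, but it ignores that each efficient block pays a $\Theta(\log_2 n_b)$ price to unroll the scale levels, whereas a bounding-box block pays only $\mathcal{O}(1)$. The argument must therefore confirm that the polynomial gap $n_b^{\,2-\mathcal{H}}$ still outpaces the extra logarithmic factor, which it does precisely because $\mathcal{H} < 2$; isolating this strict inequality as the true driver of the comparison, and keeping the roles of latency ($\log\log$) and work ($\log$) carefully distinct, is the step that needs the most attention.
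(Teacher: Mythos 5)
Your proposal is correct and follows essentially the same route as the paper: both arguments reduce the claim to a ratio of (number of blocks) times (per-block mapping cost), invoke Lemma~\ref{lemma_hausdorff} to bound the efficient grid $\Pi^2_{\lambda(\omega)}$ by $\Theta(n_b^{\mathcal{H}})$ blocks against the $\Theta(n_b^2)$ blocks of the bounding box, and conclude from the strict inequality $\mathcal{H}=\log_2 3 < 2$. The one substantive difference is the per-block cost you charge to $\lambda(\omega)$: the paper uses the parallel-reduction latency $\mathcal{O}(\log_2\log_2 n)$ as the per-block factor and evaluates the resulting limit $n^{2-\mathcal{H}}/\log_2\log_2(n)\to\infty$ by L'H\^opital, whereas you charge the full operation count $\Theta(\log_2 n_b)$ of summing the partial coordinates $\tau_\mu^x,\tau_\mu^y$. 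Your accounting is the stricter one and is arguably more faithful to the word ``work'' in the statement, since the reduction with $\Theta(\log_2 n/\log_2\log_2 n)$ threads still performs $\Theta(\log_2 n)$ total operations and only its span is $\Theta(\log_2\log_2 n)$. Because the polynomial gap $n_b^{2-\mathcal{H}}$ absorbs either logarithmic factor, both versions of the ratio yield the same conclusion; your explicit separation of work from latency is a clarification rather than a divergence, and your closing remark correctly isolates $\mathcal{H}<2$ as the driver of the result.
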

\begin{proof}
   %
    The asymptotic work improvement factor of $\lambda(\omega)$ with respect to the bounding-box approach is the quotient of the costs of mapping all
    blocks using their corresponding $\Pi^2$ structures with the consideration $n_b \in \theta(n)$
    \begin{align}
        S_{\lambda(\omega)} &=      \frac{\mathcal{O}(1) \mathcal{V}(\Pi_{BB}^2)}{\mathcal{O}(\log_2\log_2(n))\mathcal{V}(\Pi^2_{\lambda(\omega)})}\\
    \end{align}
    where $\Pi^2_{BB}$ and $\Pi_{\lambda(\omega)}^2$ are the parallel-spaces for the bounding-box and $\lambda(\omega)$ approaches, respectively. The
    parallel-space of $\Pi_{BB}^2$ corresponds to the Euclidean box of $n_b \times n_b$ blocks, and the parallel-space of $\Pi_{\lambda(\omega)}^2$ is
    $\mathcal{O}(n^{\mathcal{H}})$ by Lemma (\ref{lemma_hausdorff}). Applying the limit $n\to\infty$
    \begin{align}
        \lim_{n\to\infty}{S_{\lambda(\omega)}} &= \lim_{n\to\infty} {\frac{\frac{\partial}{\partial n}(n^{2-\mathcal{H}})}{\frac{\partial }{\partial n}(\log_2\log_2(n))}}\\
                                               &= \lim_{n\to\infty} {\frac{(2-\mathcal{H})n^{1-\mathcal{H}}}{\frac{1}{n\log_2(n))}}}\\
                                               &= \infty
    \end{align}
\end{proof}
    The importance of Theorem (\ref{theorem_speedup}) is that it guarantees the existence of a $n > n_0$ where $\lambda(\omega)$ will start becoming 
    each time faster than the bounding box approach. A theoretical curve is presented in Figure
    \ref{fig_theoretical_improvements}.
    \begin{figure}[ht!]
    \centering
    \includegraphics[scale=0.72]{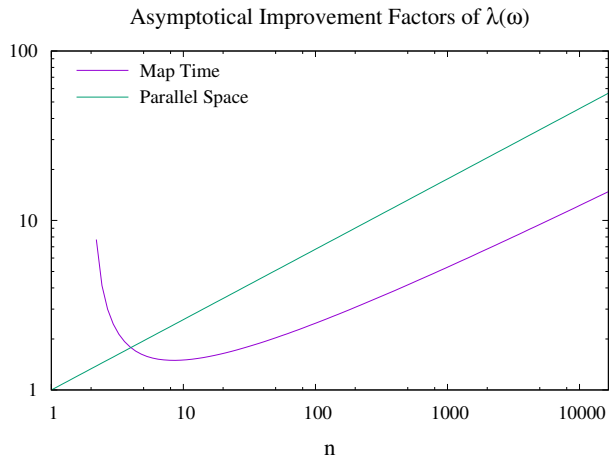}
        \caption{The theoretical improvement for parallel-space and mapping time.}
    \label{fig_theoretical_improvements}
    \end{figure}
    From the plot, one can note that space improvement is clear in the log-log
    scale. For the time improvement, the improvement decreases until $n_0 \sim
    7$, which is the point where it becomes a monotonically increasing function.
    In practice, constants could have an effect that would push $n_0$ further to
    the right.

\subsection{Intra-Block Mapping}
Once $\lambda(\omega)$ maps a block $\omega$, all the $\rho_x \times \rho_y$ threads contained have a shared reference coordinate that is available to use for
computing their individual location in the fractal. This phase of organizing the threads within a block is referred here as \textit{Intra-Block Mapping}, 
and this subsection describes three possible approaches to accomplish this.

\subsubsection{Further Unrolling}
In this approach threads inside their mapped block use the same
$\lambda(\omega)$ map but applied to each thread. By Theorem
(\ref{theorem_lambda}), the Intra-block map is parallel-space efficient 
and the mapping time becomes $\mathcal{O}(\log_2 \log_2(|B|)) \in \mathcal{O}(1)$ as 
$\rho_x, \rho_y$ are constant and do not grow with $n$.

\subsubsection{Shared Lookup Table}
The second approach is to use a shared lookup table of $\rho_x \times \rho_y =
\mathcal{O}(1)$ offset coordinates, available to any thread in any
block. Mapping each thread would cost $\mathcal{O}(1)$ memory accesses and the 
extra memory introduced by the shared table is $\mathcal{O}(\rho_x \times \rho_y) \in \mathcal{O}(1)$.

\subsubsection{Bounding Sub-boxes}
The third approach consists of using small bounding-boxes in each block. 
This approach introduces a constant number of extra threads in each block,
but allows each thread to be mapped just with $f(x) = x$ which costs
$\mathcal{O}(1)$. In order to know if it is in the fractal or not, 
each thread evaluates if $t_x \& (n-1-t_y) == 0$ is true or not, respectively, 
with $\&$ being the bitwise AND operator.

Regardless of which Intra-block mapping approach is chosen, the final mapping time will not surpass the $\mathcal{O}(\log_2\log_2(n))$ time, as the blocks have
a constant size of threads. Nevertheless, it is worth considering that \textit{Further Unrolling} introduces a constant cost in mapping time.  The
\textit{Shared Lookup Table} approach introduces a constant cost in memory and finally the \textit{Bounding Sub-boxes} introduce a constant in the number of
extra threads. Choosing one or another can depend on the specific application, \textit{i.e.}, to avoid competing with the application in the use of
memory bandwidth or arithmetic operations.

\section{Implementation and Performance Results}
\label{sec_performance}
A CUDA implementation of $\lambda(\omega)$ is put under test to obtain the actual speedup for different values of $n$. The implementation uses the
\textit{Bounding Sub-boxes} approach to arrange threads inside each block, as it is the simplest in terms of implementation time. The parallel reduction
per-block is performed using the shuffle instruction of CUDA, which allows efficient register-level communication among threads within the same warp.

The performance test consists of measuring the average time taken to write a constant value on all the elements of a Sierpi\'nski gasket of scale level $r$,
which is embedded in a $M_{n\times n}$ matrix filled with zeros. The configuration space is explored in the ranges $r=0..16$ and $\rho = 1,2,4,8,16,32$ for the
scale level and block-size respectively, in order to find the optimal setting that provides the best performance for both the bounding-box and $\lambda(\omega)$ approaches.
The average performance measures are taken by averaging 100 sub-averages, each one being an average time of 10 consecutive synchronized kernel calls. The
standard error for each mean was below $1\%$. The hardware for performance test is listed in Table \ref{table_hardware}.
\begin{table}[ht!]
\normalsize
\caption{Hardware used for performance tests.}
\begin{center}
\begin{tabular}{|c|r|}
\hline
Device	&	Model\\
\hline
GPU	&	Titan-X Pascal, GP102, 3584 cores, 12GB\\
CPU	&	Intel i7-6950X 10-core Broadwell\\
RAM	&	128GB DDR4 2400MHz\\
\hline
\end{tabular}
\end{center}
\label{table_hardware}
\end{table}

Figure \ref{fig_performance} presents the speedup of $\lambda(\omega)$ over the bounding-box approach, as well as the running times for the two mapping techniques.
\begin{figure*}[ht!]
\centering
\includegraphics[scale=0.72]{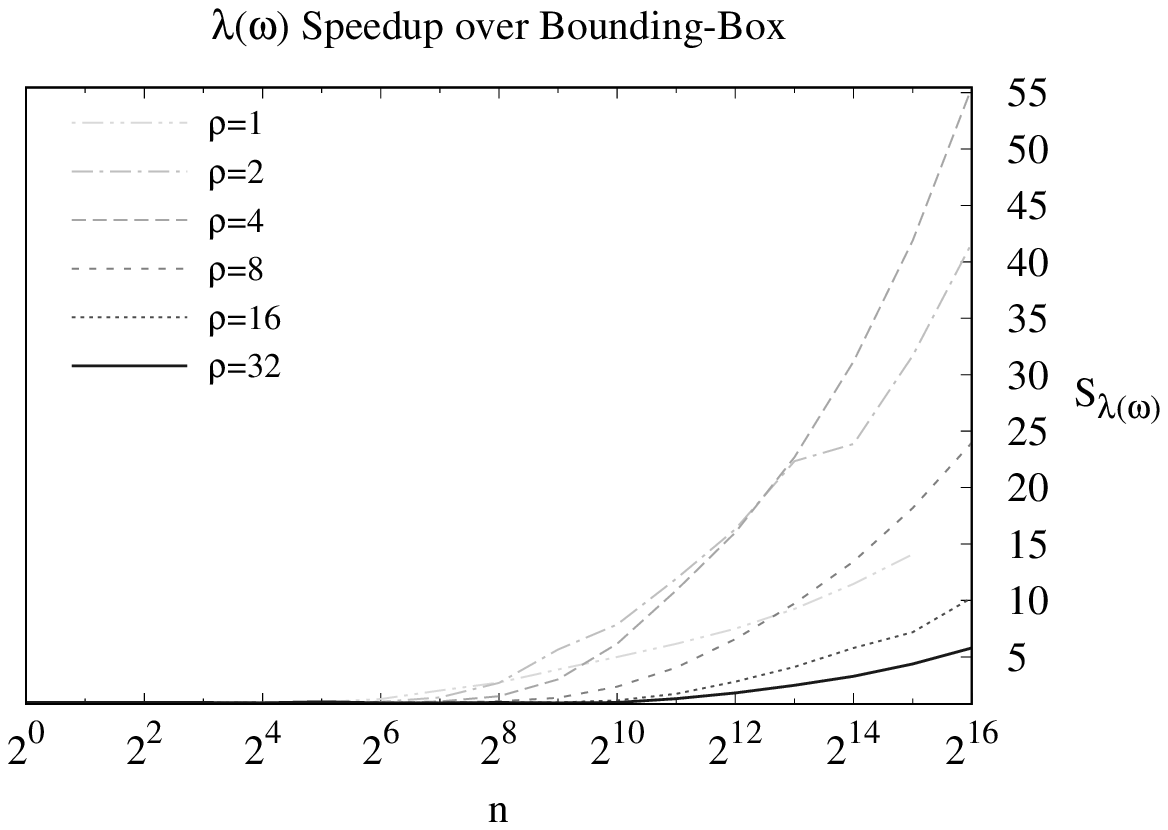}
\includegraphics[scale=0.72]{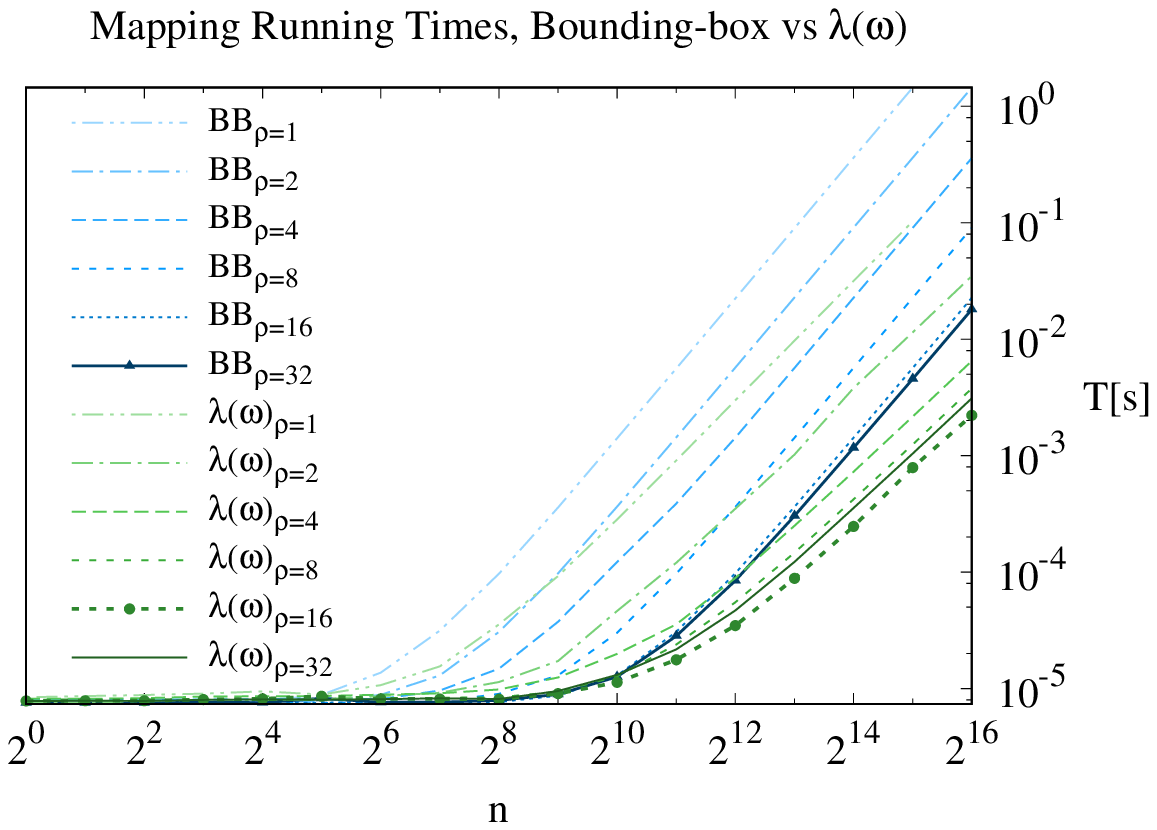}
    \caption{On the left, the speedup of $\lambda(\omega)$ with respect to the bounding-box approach at different block-size configurations. On the right, their
    absolute running times at different block-size configurations.}
\label{fig_performance}
\end{figure*}
For values of $n < 2^8$, one can note that only some curves offer speedup. Once $n > 2^8$, the speedup begins to increase for all block-size configurations,
reaching the higher values at $n=2^{16}=65536$, which was the highest problem size that fit in the GPU memory. An important aspect to note from the speedup curve
is that for the largest possible block size, $|B|=\rho \times \rho=32 \times 32$, the $\lambda(\omega)$ map runs the test approximately
$6\times$ faster than the bounding-box approach. Furthermore, as blocks become smaller in $\rho$, the improvement increases dramatically, reaching up to $55\times$ of
speedup. 

The plot of the running times provides further insights on what configuration is the best suited for each mapping technique. By looking at the running times of
the small block configurations, one can note that regardless of their high speedup, their running times are the slowest ones. For the bounding-box approach (BB)
the best performance is obtained when the block-size is maximum. For $\lambda(\omega)$ the best performance is found when using a block of $|B| = 16\times 16$
threads. If the curves of the best configuration for each implementation are considered, \textit{i.e.}, the ones with marked points on Figure
\ref{fig_performance}, right, then the speedup provided by $\lambda(\omega)$ increasing further more reaching almost an order of magnitude. The other running
time curves are still useful to visualize that as blocks become smaller, the value of $n_0$ where $\lambda(\omega)$ is more convenient moves closer to the
origin, and vice versa, however the GPU with its current organization and architecture is not fully utilized by such small block configurations, thus leading to
an inferior performance. Therefore, in practice large blocks should be utilized and by Theorem (\ref{theorem_speedup}), beyond $n = 2^{16}$ the speedup would
keep increasing in favor of $\lambda(\omega)$.

\section{Discussion}
\label{sec_discussion}
The results obtained in this work have shown that a parallel-space efficient mapping function can lead to significant performance gains in applications that
require processing an embedded Sierpi\'nski gasket.  The analysis and formulation of $\lambda(\omega)$ has provided three important results; (1) There exists a
correspondence between a quasi-regular $2$-orthotope of discrete elements and the elements of the embedded Sierpi\'nski gasket fractal. (2) Such correspondence
from parallel to data space can be computed in $\mathcal{O}(\log_2 \log_2(n))$ time using a block-space map that is based on efficient parallel reductions. (3)
The total work of mapping the $2$-orthotope with $\lambda(\omega)$ is asymptotically smaller than the work generated by the bounding box approach, leading to a
monotonically increasing speedup that is guaranteed to occur once a $n_0$ value is reached.

The experimental performance results confirm the theory as once the fractal reaches the linear size $n_0 = 2^8$, the
speedup begins to increase monotonically for all block-sizes. Using the maximum block size of $32 \times 32$ threads, $\lambda(\omega)$ 
reached up to $6\times$ of speedup. The maximum speedup obtained was approximately $55\times$ with blocks of $1$ thread, however
such block configuration is not practical for the GPU architecture. Still, having measured the running times with  
different block configurations helped in understanding that reducing the block size only brings the $n_0$ value closer to the
origin and increasing the block-size pushes it forward. Eventually, all block configurations can reach up to $55\times$ of speedup 
and beyond if the fractal is large enough. 

The GPU map found for the embedded Sierpi\'nski gasket may be adapted to work for other types of embedded fractals that follow a
similar building scheme applying modifications to the helper, weight and offset functions. In order to obtain speedup, it is
crucial to check if the overall work will be asymptotically smaller than the bounding-box approach.

Two important questions may be formulated from the results obtained in this work. The first one is: \textit{Can there exist a
general $\lambda(\omega)$ that maps a family of embedded fractals who share the same building principle?}, and the second
question: \textit{can there exist a $\lambda(\omega)$ that maps in $\mathcal{O}(1)$ time using no more than $\mathcal{O}(1)$ extra memory?}.
Future research in these directions can provide important insights on the limits of efficient GPU computing for embedded fractal domains.

\section*{Acknowledgment}
This project was supported by the research project FONDECYT N$^o$ 3160182 from
CONICYT, as well as by the Nvidia CUDA Research Center at the Department of 
Computer Science (DCC) from University of Chile.



%
\bibliographystyle{plain}
\bibliography{fractal-map}

\end{document}